\documentclass[10pt]{article}

\usepackage[letterpaper, margin=1in, top=1.25in, bottom=1.25in]{geometry}

\usepackage{amsmath, amssymb, amsthm}
\usepackage{mathtools}

\usepackage{lmodern}
\usepackage[T1]{fontenc}
\usepackage[utf8]{inputenc}
\usepackage{microtype}

\usepackage{graphicx}
\usepackage{xcolor}
\usepackage{textcomp}

\usepackage[hidelinks]{hyperref}

\newcommand{\Fq}{\mathbb{F}_q}

\DeclareMathOperator{\rank}{rank}

\newtheorem{theorem}{Theorem}
\newtheorem{lemma}{Lemma}
\newtheorem{corollary}{Corollary}
\newtheorem{definition}{Definition}
\theoremstyle{remark}
\newtheorem{remark}{Remark}

\makeatletter
\def\@maketitle{%
  \newpage
  \null
  \vskip 2em%
  \begin{center}%
    {\LARGE \bfseries \@title \par}%
    \vskip 1.5em%
    {\normalsize
      \lineskip .5em%
      \begin{tabular}[t]{c}%
        \@author
      \end{tabular}\par}%
    \vskip 0.8em%
    {\small \@date}%
  \end{center}%
  \par
  \vskip 1.5em}
\makeatother

\renewenvironment{abstract}{%
  \vspace{0.8em}%
  \begin{center}%
    \begin{minipage}{0.9\textwidth}%
    \small
    \noindent\textbf{Abstract.}\enspace\ignorespaces
}{%
    \end{minipage}%
  \end{center}%
  \vspace{0.8em}%
}

\begin{document}

\renewcommand{\thefootnote}{\fnsymbol{footnote}}

\title{Quantum Entanglement Halves the Oblivious Update Bandwidth}

\author{
  Sagar Dubey\textsuperscript{1}\thanks{sagar.dubey@stonybrook.edu} \\[0.5em]
  {\small\itshape \textsuperscript{1}Stony Brook University, Stony Brook, NY 11794, USA}
}

\date{}

\maketitle

\renewcommand{\thefootnote}{\arabic{footnote}}
\setcounter{footnote}{0}

\begin{abstract}
In MDS-coded distributed storage, the oblivious update problem asks a stale node to refresh
its coded share after a single message symbol changes, with no party knowing which symbol
moved. For an $(n,k)$ code over $\mathbb{F}_q$ with per-node storage $\alpha$, the classical
download is at least $\alpha k \log_2 q$ bits (Nakkiran--Shah--Rashmi), because obliviousness
forces each helper to convey $\alpha$ field symbols without knowing the edit location. We show
that entanglement shared among the $k$ helpers cuts this cost. For $\alpha = 2$, a
$[[k, k{-}2]]_q$ CSS code lets each helper send a single dimension-$q$ qudit, achieving
$k \log_2 q$ bits-equivalent---exactly half the classical bound. For general $\alpha$, a
$[[\lceil \alpha/2 \rceil k,\, \lceil \alpha/2 \rceil k - \alpha]]_q$ CSS code achieves
$\lceil \alpha/2 \rceil \cdot k \log_2 q$ with $\lceil \alpha/2 \rceil$ qudits per helper:
exact halving for even $\alpha$, approaching a factor of two as $\alpha$ grows. A
superdense-coding converse shows this is optimal within the class of protocols in which each
helper transmits integer-dimension qudits---once all $k$ helpers
transmit, the stale node holds every qudit together with its entangled partner, the exact
configuration in which one qudit carries two classical symbols. The advantage parallels the
recent entanglement-assisted halving of repair bandwidth (Hu--Nomeir--Aytekin--Ulukus), but
the resource exploited is obliviousness itself: the two-symbols-per-helper requirement that
obliviousness creates is precisely what superdense coding supplies. Results hold for all
$(n,k)$ with prime $q > \lceil \alpha/2 \rceil k$.
\end{abstract}

\section{Introduction}
\label{sec:intro}

Distributed storage systems encode files across $n$ nodes so that any $k$ nodes suffice for reconstruction. When the stored content changes, stale nodes must update their coded shares. The \emph{oblivious update} model of Nakkiran, Shah, and Rashmi~\cite{nakkiran2014fundamental} considers the setting where a single message symbol changes, and neither the helpers nor the stale node know which symbol was modified. They showed that for $(n,k)$ MDS codes over $\Fq$, the stale node must download at least $2k\log_2 q$ bits from $k$ updated helpers---a per-helper cost of $2\log_2 q$.

Recent work by Hu, Nomeir, Aytekin, and Ulukus~\cite{hu2026exact, hu2026breaking} demonstrated that shared quantum entanglement among helper nodes can halve the \emph{repair} bandwidth in distributed storage at the minimum-storage regenerating (MSR) point. Their mechanism uses CSS (Calderbank-Shor-Steane) stabilizer codes: helpers encode classical data into qudits via Pauli operators, and the receiver extracts linear combinations through stabilizer measurements. Each qudit carries two classical symbols (one via the $X$-syndrome, one via the $Z$-syndrome), achieving a factor-of-two improvement over classical repair.

\textbf{Our contribution.} The CSS/superdense-coding mechanism is due to Hu, Nomeir, Aytekin, and Ulukus~\cite{hu2026exact, hu2026breaking}, who used it to halve repair bandwidth; we do not claim it. What we claim is that oblivious update is the natural home for this mechanism, not an incidental second application of it. Obliviousness is precisely \emph{why} each helper must convey $\alpha$ field symbols: unable to localize the edit, a helper cannot send the single difference symbol that a non-oblivious update would need, and is instead forced to deliver its full $d_i \in \Fq^\alpha$ (the $\alpha$-symbols-per-helper requirement that drives the classical $\alpha k\log_2 q$ bound). Superdense coding is exactly the primitive that packs two field symbols into one transmitted qudit. The match is structural rather than borrowed, and it lets us prove an update-specific converse---the contribution here is that converse, not the mechanism. For even~$\alpha$, the bound is tight: the converse $\log_2 D_i \ge (\alpha/2)\log_2 q$ and the CSS achievability coincide, giving exact halving. For odd~$\alpha$, the per-helper converse still forces $\log_2 D_i \ge (\alpha/2)\log_2 q$; since each helper transmits an integer number of dimension-$q$ qudits, this rounds up to $\lceil \alpha/2 \rceil \log_2 q$, which the symmetric CSS construction attains. The construction is therefore optimal within the class of integer-dimension-qudit protocols for every~$\alpha$. Whether non-integer-dimension protocols can reach the continuous bound $(\alpha/2)\log_2 q$ for odd~$\alpha$ is left open (Open Question~1, Section~\ref{sec:conclusion}).

\begin{theorem}[Main result]\label{thm:main_intro}
Let $\mathcal{C}$ be an $(n,k)$ MDS code over $\Fq$ with per-node storage $\alpha \geq 2$ and prime $q > \lceil\alpha/2\rceil \cdot k$.  The optimal entanglement-assisted oblivious update bandwidth with $d_u = k$ helpers, restricted to protocols where each helper transmits integer-dimension qudits (i.e., $D_i = q^{\beta_{q,i}}$), is:
\[
    \gamma_u^q = \lceil\alpha/2\rceil \cdot k\log_2 q \quad \text{bits-equivalent},
\]
achieved by each helper transmitting $\lceil\alpha/2\rceil$ qudits of dimension~$q$.  For $\alpha = 2$, this is exactly half the classical lower bound~\cite{nakkiran2014fundamental}.
\end{theorem}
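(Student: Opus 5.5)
The proof has two halves, achievability and a converse, and both reduce to the $\alpha=2$ building block.

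\textbf{Achievability.} Write $m=\lceil\alpha/2\rceil$ and pad each helper's content vector $d_i\in\Fq^\alpha$ with a zero coordinate if $\alpha$ is odd, so that $d_i\in\Fq^{2m}$. The stale node's new share is $\alpha$ linear functionals of the helpers' data, namely $\sum_i A_i d_i$ for decoding matrices $A_i$ determined by the MDS structure. These coefficients do not depend on the edit location, which is why the protocol is oblivious.

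I would split each $d_i$ into $m$ pairs $(x_{i,j},z_{i,j})$. Helper $i$ holds $m$ qudits of a shared $[[mk,mk-\alpha]]_q$ CSS state and applies the Pauli $X^{x_{i,j}}Z^{z_{i,j}}$ to its $j$-th qudit.

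The CSS code has $\alpha$ stabilizer generators. I choose them so that the $X$-type generators have coefficient rows that read off the needed combinations of the $x$-coordinates, and the $Z$-type generators do the same for the $z$-coordinates. The generators can be taken as Vandermonde/GRS-type parity checks; a generalized Reed--Solomon-type code over $\Fq$ of length $mk$ with distinct evaluation points needs $q>mk$, which is where that hypothesis enters. The CSS condition $H_XH_Z^{\top}=0$ must hold, and I satisfy it by taking a GRS code and its dual. I then absorb the MDS decoding coefficients into local invertible Clifford (multiplication) gates applied by each helper before transmission.

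Measuring the $\alpha$ syndromes yields exactly $\sum_i A_id_i$. Only $\alpha$ syndromes are needed, so the $X$- and $Z$-checks can be split as $\lceil\alpha/2\rceil$ and $\lfloor\alpha/2\rfloor$. The cost is $mk\log_2 q$.

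\textbf{Converse.} Fix helper $i$ and condition on the data of all other helpers. By MDS/obliviousness, the map $d_i\mapsto$ new stale share is a bijection on $\Fq^\alpha$. The argument is as follows.

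\begin{enumerate}
\item Varying the edit location and value realizes every $d_i$, so the stale node must decode $\alpha\log_2 q$ bits of information carried by helper $i$'s message alone.
\item The other helpers' systems, plus any prior shared entanglement, can be viewed as a receiver-side resource independent of $d_i$.
\item By the Holevo bound with entanglement assistance (the superdense-coding bound), $\alpha\log_2 q\le 2\log_2 D_i$, i.e. $\log_2 D_i\ge(\alpha/2)\log_2 q$.
\item Under the restriction $D_i=q^{\beta_{q,i}}$, this gives $\beta_{q,i}\ge\lceil\alpha/2\rceil$.
\item Summing over the $k$ helpers gives $\gamma_u^q\ge\lceil\alpha/2\rceil k\log_2 q$.
\end{enumerate}

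For $\alpha=2$ the bound equals $k\log_2 q$, which is half of $2k\log_2 q$.

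\textbf{Main obstacle.} The hardest step is the converse's reduction to a single-helper superdense-coding bound. The issue is that the other helpers' transmissions are themselves entangled with helper $i$'s system, and their contents also depend on the edit. I would handle this with a genie argument: reveal to the receiver all $d_j$ for $j\ne i$ and fix those helpers' encodings. This leaves a two-party entanglement-assisted channel from helper $i$, whose classical capacity is at most $2\log_2 D_i$.
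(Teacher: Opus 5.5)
Your converse follows the paper's argument (Theorem~\ref{thm:general_converse}): condition on the other helpers' data, keep their systems at the receiver as entangled side information, apply the superdense bound, round up, and sum. Your genie is legitimate as you phrase it, because those systems still reach the receiver; the paper's remark only forbids replacing them by their classical data.

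\textbf{Gap in the converse.} The genuine gap is your step~1, and the paper's proof makes the same step. It asserts that helper $i$'s system must separate all $q^\alpha$ values of $\mathbf{d}_i$, but this ignores the stale node's side information. Under single-symbol edits with $\Gamma_s\mathbf{m}$ known, the answer lies in $\Gamma_s\mathbf{m}+\{\delta\Gamma_s\mathbf{e}_j\}$. Hence two values $\mathbf{d}\neq\mathbf{d}'$ (other data fixed) need orthogonal encoded states only if $A_i(\mathbf{d}'-\mathbf{d})$ lies in some $\mathrm{span}(\Gamma_s\mathbf{e}_j,\Gamma_s\mathbf{e}_{j'})$.

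For $\alpha=2$ these spans cover $\Fq^2$, since $\Gamma_s$ has rank $2$. So all $q^2$ states are pairwise orthogonal and the superdense step is valid, but you must add this argument.

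For $\alpha\ge 3$ no repair exists, because the bound itself fails. Take $R\in\Fq^{2\times\alpha}$ injective on every such span; Schwartz--Zippel gives one once $q>(\alpha k)^2$. Let helper $i$ feed $t_i=RA_i\mathbf{d}_i\in\Fq^2$ into the one-qudit CSS protocol of Theorem~\ref{thm:achievability}. The stale node learns $R\Gamma_s\mathbf{m}'$, subtracts $R\Gamma_s\mathbf{m}$, and recovers $\delta\Gamma_s\mathbf{e}_j$ uniquely. The total cost is $k\log_2 q<\lceil\alpha/2\rceil k\log_2 q$. Sent classically, the same $t_i$ also beat the claimed $\alpha k\log_2 q$ classical bound. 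So, for single-symbol updates as defined, the optimality half of the statement holds only for $\alpha=2$. Your step~1 (and the paper's) is correct only in a model where $\mathbf{m}'$ is arbitrary, so that the side information is useless.

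\textbf{Gap in the achievability.} You have the right code: Vandermonde $H_Z$ with $\lceil\alpha/2\rceil$ rows, $\lfloor\alpha/2\rfloor$ dual-GRS rows for $H_X$, and $q>\lceil\alpha/2\rceil k$. The encoding, however, does not work as written. A fixed split of $\mathbf{d}_i$ into $x$- and $z$-halves cannot produce $\sum_i A_i\mathbf{d}_i$. Each row of $A_i$ mixes both halves, while $s_X=H_Z\mathbf{x}$ sees only $x$-exponents, and multiplication gates only rescale individual exponents.

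What works, and what the paper does, is classical preprocessing. Helper $i$ computes $t_i=A_i\mathbf{d}_i$, which is oblivious because $A_i$ does not depend on the edit. It then solves $H_Z|_{I_i}\mathbf{x}_i=t_i^{(1)}$ and $H_X|_{I_i}\mathbf{z}_i=t_i^{(2)}$, where $t_i^{(1)}$ and $t_i^{(2)}$ are the first $\lceil\alpha/2\rceil$ and last $\lfloor\alpha/2\rfloor$ entries of $t_i$. This system is solvable for every target iff $\rank H_Z|_{I_i}=\lceil\alpha/2\rceil$ and $\rank H_X|_{I_i}=\lfloor\alpha/2\rfloor$ for each helper. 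Dual containment does not give this condition. It is the real content of the paper's Step~2 and Remark~\ref{rem:explicit_hx}, where each local block of the dual-GRS $H_X$ is a Vandermonde matrix on distinct points times an invertible diagonal. Your GRS choice does satisfy the condition, but you need to state it and verify it.
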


The result is tight within this class: we prove a converse (Theorem~\ref{thm:general_converse}) via the superdense coding bound showing $\log_2 D_i \geq (\alpha/2)\log_2 q$ per helper---hence at least half the classical bandwidth---which, for integer-dimension qudits, forces $\beta_{q,i} \geq \lceil\alpha/2\rceil$ and matches the achievability (Theorem~\ref{thm:general_achievability}) using higher-dimensional CSS codes.  We present the $\alpha = 2$ case in detail (Sections~\ref{sec:example}--\ref{sec:converse}), where one qudit per helper suffices, and then extend to general~$\alpha$ in Section~\ref{sec:discussion}.

\textbf{Key insight.} The factor-of-two improvement arises because, once all $k$ helpers transmit their qudits, the stale node holds \emph{all parts} of the originally entangled state. From any single helper's viewpoint, the stale node possesses both the transmitted qudit \emph{and} the entangled partners---exactly the superdense coding configuration. Each dimension-$q$ qudit thus conveys $2\log_2 q$ bits of classical information, matching the $2\log_2 q$-bit per-helper classical requirement with half the channel resources (for $\alpha = 2$; the general case is treated in Section~\ref{sec:discussion}).

\textbf{Related work.}  The information-theoretic foundations of regenerating codes were laid by Dimakis et al.~\cite{dimakis2010network}, with exact-repair constructions by Rashmi, Shah, and Kumar~\cite{rashmi2011product}.  The update problem in coded storage was initiated by Anthapadmanabhan, Soljanin, and Vishwanath~\cite{anthapadmanabhan2010update}, with subsequent work by Rawat et al.~\cite{rawat2011update}, Han et al.~\cite{han2013update}, and Li et al.~\cite{li2020update} on update-efficient regenerating codes and update bandwidth, while Mazumdar, Chandar, and Wornell~\cite{mazumdar2014update} established update-efficiency limits for capacity-approaching codes.  The NSR oblivious update model~\cite{nakkiran2014fundamental} is the direct predecessor of our work.  Vithana and Ulukus~\cite{vithana2022pruw, vithana2023pruw_jsac} studied the \emph{private} read-update-write (PRUW) problem, which adds privacy constraints to the update setting; our obliviousness requirement is distinct (helpers lack knowledge of the change location, rather than hiding it from a curious server).  On the quantum side, superdense coding~\cite{bennett1992communication, werner2001all} and entanglement-assisted classical capacity~\cite{bennett2002entanglement} underpin the factor-of-two mechanism.  Hsieh, Devetak, and Winter~\cite{hsieh2008entanglement} characterized entanglement-assisted multiple-access channels, which is the relevant capacity framework when multiple helpers jointly transmit.  Sun and Jafar~\cite{sun2025capacity} studied the capacity of distributed quantum storage, and Senthoor~\cite{senthoor2025erasure} analyzed entanglement costs of erasure correction in quantum MDS codes.

\textbf{Paper organization.} Section~\ref{sec:problem} formulates the problem. Section~\ref{sec:classical} reviews the classical baseline. Section~\ref{sec:example} presents the $(3,2)$ example. Section~\ref{sec:achievability} gives the general construction. Section~\ref{sec:converse} proves the converse. Section~\ref{sec:discussion} discusses the mechanism, scope, and extensions. Section~\ref{sec:conclusion} concludes.

\section{Problem Formulation}
\label{sec:problem}

\subsection{Storage Code}

Consider an $(n, k)$ MDS code over a prime field $\Fq$ with per-node storage $\alpha \geq 2$ symbols (file size $B = \alpha k$). A file $\mathbf{m} = (m_1, \ldots, m_B) \in \Fq^B$ is stored across $n$ nodes, where node $i$ stores $\Gamma_i \mathbf{m} \in \Fq^\alpha$. The generator matrices $\Gamma_i \in \Fq^{\alpha \times B}$ satisfy the MDS property: for any $k$ nodes $\{i_1, \ldots, i_k\}$, the stacked matrix $[\Gamma_{i_1}; \ldots; \Gamma_{i_k}] \in \Fq^{B \times B}$ is invertible over $\Fq$.  (An $\alpha$-interleaved Reed-Solomon construction satisfies this for any $q \geq n$.)  The running example in Sections~\ref{sec:example}--\ref{sec:converse} uses $\alpha = 2$; the general case is treated in Section~\ref{sec:discussion}.

\subsection{Oblivious Update Problem}

A single symbol of $\mathbf{m}$ is modified: $\mathbf{m}' = \mathbf{m} + \delta \mathbf{e}_j$ for some $j \in [B]$ and $\delta \in \Fq \setminus \{0\}$ (or $\mathbf{m}' = \mathbf{m}$ for no change). A \emph{stale} node $s$ stores the outdated value $\Gamma_s \mathbf{m}$ and must compute the updated value $\Gamma_s \mathbf{m}'$.

\begin{definition}[Oblivious update protocol]
The stale node contacts $d_u = k$ updated helpers $h_1, \ldots, h_k$. Each helper $h_i$ knows only its own updated storage $\Gamma_{h_i} \mathbf{m}'$ (not $j$, $\delta$, or the old $\mathbf{m}$). The helpers transmit information to the stale node, which uses its side information $\Gamma_s \mathbf{m}$ to compute $\Gamma_s \mathbf{m}'$. The protocol is \emph{oblivious}: each helper's encoding depends only on its current data, independent of which symbol changed.
\end{definition}

\subsection{Entanglement-Assisted Model}

Following Hu et al.~\cite{hu2026exact}, we consider the helper-side shared entanglement model:
\begin{itemize}
    \item Prior to the update, the $k$ helpers share a pre-distributed entangled state $|\Psi\rangle \in \mathcal{H}_1 \otimes \cdots \otimes \mathcal{H}_k$ (each $\mathcal{H}_i$ of dimension $q^{\beta_q}$).
    \item Each helper applies a local encoding operation (depending on its classical data $\Gamma_{h_i}\mathbf{m}'$) to its subsystem.
    \item Each helper transmits its (encoded) quantum subsystem to the stale node.
    \item The stale node performs a joint measurement on all received qudits, using its side information $\Gamma_s \mathbf{m}$ to determine $\Gamma_s \mathbf{m}'$.
\end{itemize}

\begin{definition}[Quantum update bandwidth]
The \emph{quantum update bandwidth} is $\gamma_u^q = k \cdot \beta_q \cdot \log_2 q$ bits-equivalent, where $\beta_q$ is the number of qudits (each of dimension $q$) transmitted per helper.
\end{definition}

\textbf{Bandwidth accounting.}  The unit ``bits-equivalent'' counts the Hilbert-space dimension transmitted: one qudit of dimension~$q$ corresponds to $\log_2 q$ bits-equivalent.  This is not the information \emph{extracted} (which is $2\log_2 q$ classical bits per qudit when entanglement assists), but the channel resource consumed---the size of the quantum system that must traverse the link.  This is the correct unit because the bandwidth question is about channel use: the resource to count is the Hilbert-space dimension transmitted per helper ($\log_2 q$ per qudit), not the classical information later extracted from it ($2\log_2 q$), since counting the extracted information would conflate the channel cost with the entanglement-assisted capacity that superdense coding supplies.  The factor-of-two improvement therefore means the protocol achieves the same task using quantum channels of half the total dimension required by any classical scheme.  This convention follows Hu et al.~\cite{hu2026exact}.

\section{Classical Baseline}
\label{sec:classical}

\begin{theorem}[Classical lower bound, extending Nakkiran--Shah--Rashmi~\cite{nakkiran2014fundamental}]\label{thm:nsr}
For an $(n,k)$ MDS code over $\Fq$ with $d_u = k$ helpers and per-node storage $\alpha$, the classical oblivious update bandwidth satisfies $\gamma_u \geq \alpha k\log_2 q$ bits.
\end{theorem}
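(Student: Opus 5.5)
The plan is to lower-bound what each helper must send, one helper at a time. Fix a helper $h_i$ and show that its message $Y_i$ must take at least $q^\alpha$ distinct values, i.e. $\log_2|\mathcal{Y}_i|\ge \alpha\log_2 q$. Summing over the $k$ helpers then gives $\gamma_u\ge \alpha k\log_2 q$.

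To isolate $h_i$, first fix the data of the other $k-1$ helpers. By the MDS property, $\{h_1,\ldots,h_k\}$ determine $\mathbf{m}'$ through the invertible stacked matrix. I will use two facts about the set $\{s\}\cup\{h_l: l\neq i\}$:
\begin{itemize}
\item Its projections $\Gamma_{h_l}$ together with $\Gamma_s$ form $k$ blocks.
\item If $s$ is not among the helpers, this set is also an information set.
\end{itemize}

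Consider the family of updated files $\mathbf{m}'$ for which $\Gamma_{h_l}\mathbf{m}'$ is fixed for all $l\neq i$. These form an affine subspace of dimension $\alpha$, parametrized bijectively by $d_i=\Gamma_{h_i}\mathbf{m}'\in\Fq^\alpha$.

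The next step is to exhibit, inside this family, $q^\alpha$ files that the stale node must distinguish with the same side information. The target requirement is that as $d_i$ ranges over $\Fq^\alpha$, the required output $\Gamma_s\mathbf{m}'$ ranges over all of $\Fq^\alpha$. This holds because the map $d_i\mapsto \Gamma_s\mathbf{m}'$ is an affine bijection, again by the MDS property applied to $\{s\}\cup\{h_l\}_{l\neq i}$.

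The difficulty is obliviousness together with the side information $\Gamma_s\mathbf{m}$. The realizable scenarios are pairs $(\mathbf{m},\mathbf{m}')$ differing in one coordinate, and the stale node knows $\Gamma_s\mathbf{m}$. I therefore need:
\begin{itemize}
\item a fixed old file $\mathbf{m}$ (hence fixed side information), and
\item single-symbol edits $\mathbf{m}'=\mathbf{m}+\delta\mathbf{e}_j$ that keep $\Gamma_{h_l}\mathbf{m}'$ fixed for $l\neq i$ while producing $q^\alpha$ distinct values of $\Gamma_s\mathbf{m}'$.
\end{itemize}
Single edits from one $\mathbf{m}$ generally alter all helpers at once, so this exact construction fails.

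The fix exploits obliviousness. Helper $h_i$'s encoder $Y_i=f_i(d_i)$ is a fixed function of $d_i$ alone, valid for every possible history. So it suffices to show that $f_i$ must be injective on $\Fq^\alpha$. Suppose $f_i(d_i)=f_i(\tilde d_i)$ with $d_i\neq\tilde d_i$. I would then construct two scenarios with:
\begin{itemize}
\item the same side information $\Gamma_s\mathbf{m}$,
\item the same messages from the other helpers, and
\item helper $i$'s data equal to $d_i$ and $\tilde d_i$ respectively,
\end{itemize}
while the correct outputs $\Gamma_s\mathbf{m}'$ differ. Since every message the stale node receives agrees in both scenarios, it errs in one of them.

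Building this pair is the main obstacle. The idea is to choose $\mathbf{m}'$ and $\tilde{\mathbf{m}}'$ agreeing on the other helpers and then find old files $\mathbf{m},\tilde{\mathbf{m}}$ with $\Gamma_s\mathbf{m}=\Gamma_s\tilde{\mathbf{m}}$, each one edit away from its new file. I expect this to need the freedom in $(j,\delta)$ and $q$ large enough, mirroring the NSR argument. Summing the resulting per-helper bound $\log_2|\mathcal{Y}_i|\ge\alpha\log_2 q$ over the $k$ helpers completes the proof.
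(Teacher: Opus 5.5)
There is a genuine gap, and it is the step you defer: constructing the two confusable scenarios. It cannot be filled for $\alpha\ge 3$, because the statement is false there.

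\textbf{Where the argument closes and where it stops.} Write $\Gamma_s\mathbf{m}'=\sum_l P_l\,\Gamma_{h_l}\mathbf{m}'$, with $P_i$ invertible by MDS. Suppose $\mathbf{m}',\tilde{\mathbf{m}}'$ agree on every helper except $h_i$, where they hold $\mathbf{a}\neq\mathbf{b}$. Take old files $\mathbf{m}'-\mathbf{e}$ and $\tilde{\mathbf{m}}'-\tilde{\mathbf{e}}$, with $\mathbf{e},\tilde{\mathbf{e}}$ of weight at most one. These give the stale node the same side information iff $\Gamma_s(\mathbf{e}-\tilde{\mathbf{e}})=P_i(\mathbf{a}-\mathbf{b})$. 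So a collision $f_i(\mathbf{a})=f_i(\mathbf{b})$ is ruled out exactly when $P_i(\mathbf{a}-\mathbf{b})$ lies in the union of the planes $\mathrm{span}(\mathbf{g}_j,\mathbf{g}_{j'})$ spanned by pairs of columns of $\Gamma_s$.

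For $\alpha=2$, $\Gamma_s$ has rank $2$, so two independent columns already span $\Fq^2$. Then $f_i$ is injective and your argument closes, giving $2k\log_2 q$ (the NSR bound).

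For $\alpha\ge 3$, the union of planes has relative size at most $\binom{B}{2}/q^{\alpha-2}$ in $\Fq^\alpha$. Your hope that ``$q$ large enough'' helps therefore points the wrong way. All the pair argument yields is a clique of size $q^2$ inside one plane, i.e.\ $\log_2|\mathcal{Y}_i|\ge 2\log_2 q$.

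\textbf{Counterexample for $\alpha\ge 3$.} Take $(n,k,\alpha)=(3,2,3)$ with $\Gamma_1=[I_3\mid 0]$, $\Gamma_2=[0\mid I_3]$, $\Gamma_3=[I_3\mid I_3]$. This code is MDS; let nodes $1,2$ be the helpers and node $3$ the stale node.
\begin{itemize}
\item Helper $i$ sends only $(d_{i1}-d_{i3},\,d_{i2}-d_{i3})$.
\item Node $3$ adds the two messages and subtracts the same functional of $\Gamma_3\mathbf{m}$. It learns $(z_1-z_3,\,z_2-z_3)$ for $\mathbf{z}=\Gamma_3(\mathbf{m}'-\mathbf{m})$.
\item Here $\mathbf{z}\in\{0\}\cup\{\delta\mathbf{e}_t : t\in\{1,2,3\},\ \delta\neq 0\}$. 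The possible values $(0,0)$, $(\delta,0)$, $(0,\delta)$, $(-\delta,-\delta)$ are pairwise distinct.
\item So node $3$ recovers $\mathbf{z}$, hence $\Gamma_3\mathbf{m}'$, using $4\log_2 q<6\log_2 q=\alpha k\log_2 q$ bits, for every prime $q$.
\end{itemize}
More generally, suppose some $(\alpha-2)$-dimensional $W\subseteq\Fq^\alpha$ meets every such plane only in $0$. Then each helper can send $\mathbf{d}_i$ modulo $P_i^{-1}W$, which is two symbols per helper.

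\textbf{Relation to the paper's proof.} The paper's one-line pigeonhole proof has exactly the hole you spotted. It requires all $q^\alpha$ outcomes to be distinguished ``given the other helpers' messages'' and silently drops $\Gamma_s\mathbf{m}$. You were right to flag this. The correct resolution is that the bound holds for $\alpha=2$, or when there is no side information, not that the argument can be repaired for general $\alpha$.
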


The proof uses the same pigeonhole argument as~\cite{nakkiran2014fundamental}, extended to per-node storage $\alpha$: each helper maps its data $d_i \in \Fq^\alpha$ to a classical message. For the stale node to distinguish all $q^\alpha$ possible outcomes (given the other helpers' messages), the helper's message must have range $\geq q^\alpha$, requiring $\geq \alpha\log_2 q$ bits per helper, hence $\geq \alpha k\log_2 q$ bits in total.

\section{Illustrative Example: $(n,k) = (3,2)$}
\label{sec:example}

We first demonstrate the protocol for the smallest non-trivial case.

\subsection{Setup}
File $\mathbf{m} = (m_1, m_2, m_3, m_4) \in \Fq^4$. Generators:
\begin{align*}
    \Gamma_1 &= \begin{pmatrix} 1 & 0 & 0 & 0 \\ 0 & 1 & 0 & 0 \end{pmatrix}\!, &
    \Gamma_2 &= \begin{pmatrix} 0 & 0 & 1 & 0 \\ 0 & 0 & 0 & 1 \end{pmatrix}\!, &
    \Gamma_3 &= \begin{pmatrix} 1 & 0 & 1 & 0 \\ 0 & 1 & 0 & 1 \end{pmatrix}\!.
\end{align*}
Helpers: nodes 1, 2 (storing $(m_1', m_2')$ and $(m_3', m_4')$). Stale: node 3 (storing $(m_1+m_3, m_2+m_4)$).

\subsection{CSS Code}
Use a $[[2, 0]]_q$ CSS code with parity checks $H_X = [1, a]$ and $H_Z = [1, b]$ where $b = -a^{-1}$ (ensuring $H_X H_Z^T = 1 + ab = 0$). The shared state is
\[
    |\psi\rangle = \frac{1}{\sqrt{q}} \sum_{j=0}^{q-1} |j\rangle_1 |aj\rangle_2,
\]
a maximally entangled (generalized Bell) state in the codespace (the +1 eigenspace of both $Z \otimes Z^b$ and $X \otimes X^a$).

\subsection{Protocol}
\begin{enumerate}
    \item \textbf{Encode:} Helper 1 applies $X(m_1')Z(m_2')$ to qudit 1. Helper 2 applies $X(-am_3')Z(-bm_4')$ to qudit 2.
    \item \textbf{Transmit:} Each helper sends its qudit to node 3.
    \item \textbf{Measure:} Node 3 performs the joint projective measurement of the two stabilizers $S_Z = Z \otimes Z^b$ and $S_X = X \otimes X^a$, obtaining eigenvalues $\omega^{s_X}$ and $\omega^{s_Z}$ respectively.  The Pauli encodings shift these eigenvalues:
    \begin{align*}
        s_X &= H_Z \cdot (m_1', -am_3')^T = m_1' + b(-am_3') = m_1' + m_3', \\
        s_Z &= H_X \cdot (m_2', -bm_4')^T = m_2' + a(-bm_4') = m_2' + m_4'.
    \end{align*}
    (Using $ab = -1$ in both lines.)
    \item \textbf{Update:} Node 3 obtains $(s_X, s_Z) = (m_1'+m_3', m_2'+m_4') = \Gamma_3 \mathbf{m}'$.
\end{enumerate}

\subsection{Properties}
\begin{itemize}
    \item \textbf{Correctness:} $(s_X, s_Z) = \Gamma_3 \mathbf{m}'$ for all $\mathbf{m}' \in \Fq^4$.
    \item \textbf{Obliviousness:} Each helper encodes only its own data (not $j$ or $\delta$).
    \item \textbf{Bandwidth:} 2 qudits = $2\log_2 q$ bits-equivalent (vs.\ classical $4\log_2 q$).
    \item \textbf{Numerical verification:} Exhaustive syndrome-level simulation confirms correctness for all $\mathbf{m}' \in \Fq^4$ and all single-symbol changes over $\Fq$ with $q \in \{5,7,11,13\}$, including a general (non-systematic) Vandermonde MDS code ($>10^6$ test cases, 0 failures).  An end-to-end Hilbert-space simulation over $\mathbb{F}_5$ (qudit dimension~5, state vector dimension~25) reproduces the syndrome-level result for all $625$ updated message values.
\end{itemize}

\section{General Achievability}
\label{sec:achievability}

\begin{theorem}[Achievability]\label{thm:achievability}
For any $(n,k)$ MDS code over $\Fq$ ($q$ prime, $q \geq n$) with $B = 2k$ and $\alpha = 2$: there exists an entanglement-assisted oblivious update protocol with $d_u = k$ helpers and quantum bandwidth $\gamma_u^q = k\log_2 q$.
\end{theorem}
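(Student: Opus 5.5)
Our plan is to generalize the $(3,2)$ example of Section~\ref{sec:example}: we build a $[[k,k-2]]_q$ CSS code with one $X$-check row $H_X=(a_1,\dots,a_k)$ and one $Z$-check row $H_Z=(b_1,\dots,b_k)$. The stabilizer measurement at the stale node should return the two syndromes
\[
s_X=H_Z\cdot x,\qquad s_Z=H_X\cdot z,
\]
where $x_i,z_i$ are the Pauli exponents that helper $i$ applies locally. Commutativity of the stabilizers requires $\sum_i a_i b_i=0$. The shared state $|\Psi\rangle$ is any state in the joint $+1$ eigenspace of $S_X=\bigotimes_i X^{a_i}$ and $S_Z=\bigotimes_i Z^{b_i}$; this eigenspace is nonzero because the two operators commute. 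Each helper holds one qudit, so $\beta_q=1$ and $\gamma_u^q=k\log_2 q$ follows immediately once correctness is shown.

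First we reduce the target to linear functionals. By the MDS property the stacked matrix $G=[\Gamma_{h_1};\dots;\Gamma_{h_k}]$ is invertible, so $\Gamma_s\mathbf{m}'=\Gamma_s G^{-1}(d_1;\dots;d_k)$, where $d_i=\Gamma_{h_i}\mathbf{m}'\in\Fq^2$. Writing $\Gamma_s G^{-1}$ in blocks gives matrices $M_i\in\Fq^{2\times 2}$ with $\Gamma_s\mathbf{m}'=\sum_i M_i d_i$. Let $u_i^T$ and $w_i^T$ denote the two rows of $M_i$. We need $s_X=\sum_i u_i^T d_i$ and $s_Z=\sum_i w_i^T d_i$.

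Next we choose the encodings. Helper $i$ applies $X(x_i)Z(z_i)$ with
\[
x_i=b_i^{-1}u_i^T d_i,\qquad z_i=a_i^{-1}w_i^T d_i .
\]
These exponents depend only on helper $i$'s own data and on public code parameters, so the protocol is oblivious. The syndromes are then $s_X=\sum_i b_i x_i=\sum_i u_i^T d_i$ and $s_Z=\sum_i a_i z_i=\sum_i w_i^T d_i$, exactly as required. The standard CSS syndrome argument applies: $X(x)$ shifts the $S_Z$ eigenvalue by $\omega^{H_Z\cdot x}$, and $Z(z)$ shifts the $S_X$ eigenvalue by $\omega^{H_X\cdot z}$, up to sign conventions. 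Because the encoded state is an eigenstate of both stabilizers, the measurement is deterministic, so node $s$ reads $\Gamma_s\mathbf{m}'$ exactly. The old share $\Gamma_s\mathbf{m}$ is not even needed.

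The main obstacle is choosing $a_i,b_i\in\Fq^{*}$, all nonzero, with $\sum_i a_ib_i=0$. For $k\ge 2$ we take $a_i=1$ for all $i$, $b_1=\dots=b_{k-1}=1$ and $b_k=-(k-1)$. This works provided $k-1\not\equiv 0 \pmod q$, which holds since $q\ge n>k$. An alternative is to pick $b_k=-a_k^{-1}\sum_{i<k}a_ib_i$ and perturb the earlier entries if this sum vanishes. The statement's assumption $q\ge n$ covers this, with $q>k$ being what is actually used. A second concern is a row $u_i$ or $w_i$ that is zero; this is harmless, since the helper then simply applies the identity for that component.
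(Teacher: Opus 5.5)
Your proposal is correct and follows the paper's proof essentially step for step. It uses the same parity checks $H_X=(1,\dots,1)$ and $H_Z=(1,\dots,1,-(k-1))$ from Lemma~\ref{lem:css_existence}, the same block decomposition of $\Gamma_s[\Gamma_{h_1};\dots;\Gamma_{h_k}]^{-1}$, and the same encoding exponents, namely each row functional divided by the corresponding nonzero check entry, so that the two syndromes sum to $\Gamma_s\mathbf{m}'$. One small precision point is that the joint $+1$ eigenspace is nonzero not merely because $S_X$ and $S_Z$ commute. The real reason is that the group they generate contains no nontrivial scalar $\omega^c I$: every non-identity element is traceless, so the eigenspace has dimension $q^{k-2}$.
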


The construction requires the following ingredient.

\begin{lemma}[CSS parity-check existence]\label{lem:css_existence}
Let $k \geq 2$ and let $q$ be a prime with $q \geq k+1$.  There exist row vectors $H_X, H_Z \in \Fq^{1 \times k}$ satisfying:
\begin{enumerate}
    \item[(C1)] $H_X H_Z^T = 0$ \quad (dual containment),
    \item[(C2)] $[H_X]_i \neq 0$ and $[H_Z]_i \neq 0$ for all $i \in [k]$.
\end{enumerate}
\end{lemma}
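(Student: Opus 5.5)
The plan is an explicit construction. I fix $H_X$ first and then pick $H_Z$ as a vector in the hyperplane $H_X^{\perp}$ that has no zero coordinates. Concretely, take $H_X = (1,1,\ldots,1) \in \Fq^{1\times k}$, so (C2) holds for $H_X$ trivially. Condition (C1) then says exactly that the entries of $H_Z$ sum to zero in $\Fq$. So it suffices to produce $k$ nonzero elements of $\Fq$ whose sum is $0$.

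I would split into two cases.

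\emph{Case $k$ even.} Take $H_Z = (1,-1,1,-1,\ldots,1,-1)$. Every entry is nonzero because $q$ is prime, hence $-1 \neq 0$. The entries sum to $0$.

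\emph{Case $k$ odd, $k\ge 3$.} Take $H_Z = (1,1,-2,1,-1,\ldots,1,-1)$. The first three entries sum to $0$, and the remaining $k-3$ entries (an even number) cancel in pairs. The entry $-2$ is nonzero as long as $q \neq 2$. This holds because $q \ge k+1 \ge 4$ when $k \ge 3$.

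As a uniform alternative, one could take $H_Z = (1,\ldots,1,-(k-1))$. Its last entry is nonzero iff $q \nmid k-1$, which follows from $1 \le k-1 < q$. This single formula covers all $k \ge 2$ at once, and it is where the hypothesis $q \ge k+1$ is used most transparently. I would present this version and mention the case split only as a remark.

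The only step needing care is confirming that the last entry $-(k-1)$ does not vanish mod $q$. Primality of $q$ together with $0 < k-1 < q$ settles it. Finally, (C1) is the computation $H_X H_Z^T = (k-1)\cdot 1 - (k-1) = 0$.

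I would also remark that the $(3,2)$ example, with $H_X=[1,a]$ and $H_Z=[1,-a^{-1}]$, is consistent with the construction after a rescaling of coordinates.
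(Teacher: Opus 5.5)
Your proposal is correct and essentially identical to the paper's proof: the uniform choice $H_X=(1,\ldots,1)$, $H_Z=(1,\ldots,1,-(k-1))$ is exactly the paper's construction. As in the paper, (C2) follows from $0<k-1<q$ and (C1) from $(k-1)\cdot 1-(k-1)=0$; your even/odd case split is also valid but not needed.
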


\begin{proof}
Set $H_X = (1, 1, \ldots, 1)$ and $H_Z = (1, 1, \ldots, 1, -(k{-}1))$.  Since $q > k$, the last entry $-(k{-}1) \not\equiv 0 \pmod{q}$, so (C2) holds.  For (C1): $H_X H_Z^T = (k{-}1) \cdot 1 + 1 \cdot (-(k{-}1)) = 0$.
\end{proof}

Any pair $(H_X, H_Z)$ satisfying (C1)--(C2) defines a $[[k, k{-}2]]_q$ CSS code with two stabilizer generators; the protocol below works for \emph{any} such choice.

\begin{proof}[Proof of Theorem~\ref{thm:achievability}]
We construct the protocol explicitly.

\textbf{CSS code.} Fix any $H_X, H_Z$ satisfying Lemma~\ref{lem:css_existence}, defining a $[[k, k{-}2]]_q$ CSS code on $k$ qudits.

The code has $k - (k-2) = 2$ stabilizer generators, yielding exactly $A = 2$ syndrome symbols: one X-syndrome $s_X = H_Z \cdot \mathbf{x} \in \Fq$ and one Z-syndrome $s_Z = H_X \cdot \mathbf{z} \in \Fq$.

\textbf{Pre-protocol.} The $k$ helpers share any state $|\Psi\rangle$ in the CSS codespace. (The choice of logical state is immaterial: logical operators commute with stabilizers, so they shift the codespace state but not the syndrome eigenvalues produced by the helpers' Pauli encodings.)

\textbf{Reconstruction coefficients.} Define
\[
    P = \Gamma_s \cdot [\Gamma_{h_1}; \ldots; \Gamma_{h_k}]^{-1} = [P_1 \mid \cdots \mid P_k],
\]
where each $P_i \in \Fq^{2 \times 2}$. By MDS, the inverse exists. Then $\Gamma_s \mathbf{m}' = \sum_{i=1}^k P_i \mathbf{d}_i$ where $\mathbf{d}_i = \Gamma_{h_i} \mathbf{m}'$.

\textbf{Encoding.} Helper $h_i$ (data $\mathbf{d}_i \in \Fq^2$) applies to qudit $i$:
\begin{equation}\label{eq:encoding}
    X(x_i) \, Z(z_i), \quad x_i = \frac{[P_i]_1 \cdot \mathbf{d}_i}{[H_Z]_i}, \quad z_i = \frac{[P_i]_2 \cdot \mathbf{d}_i}{[H_X]_i},
\end{equation}
where $[P_i]_j$ denotes the $j$-th row of $P_i$.

\textbf{Syndrome extraction (measurement).}  The CSS code has two stabilizer generators: $S_X = \bigotimes_{i=1}^{k} X^{[H_X]_i}$ and $S_Z = \bigotimes_{i=1}^{k} Z^{[H_Z]_i}$, each with eigenvalues $\{\omega^s : s \in \Fq\}$ where $\omega = e^{2\pi i/q}$.  The stale node performs the \emph{joint projective measurement} onto the $q^2$ simultaneous eigenspaces of $(S_X, S_Z)$, obtaining outcome $(s_X, s_Z) \in \Fq^2$.  Because each helper $h_i$ applied $X(x_i)Z(z_i)$ (which shifts the $S_Z$-eigenvalue by $[H_Z]_i x_i$ and the $S_X$-eigenvalue by $[H_X]_i z_i$), the syndromes are:
\begin{align}
    s_X &= \sum_{i=1}^k [H_Z]_i \cdot x_i = \sum_{i=1}^k [P_i]_1 \cdot \mathbf{d}_i = [\Gamma_s \mathbf{m}']_1, \label{eq:sx}\\
    s_Z &= \sum_{i=1}^k [H_X]_i \cdot z_i = \sum_{i=1}^k [P_i]_2 \cdot \mathbf{d}_i = [\Gamma_s \mathbf{m}']_2. \label{eq:sz}
\end{align}
The CSS stabilizer-measurement correspondence~\cite{gottesman1997, nielsen_chuang_2010} ensures that the algebraic syndrome extraction analyzed here is exactly realized by the joint projective stabilizer measurement on the quantum state; the $(3,2)$ Hilbert-space simulation of Section~\ref{sec:example} confirms this correspondence in the smallest case, and it extends to higher-dimensional CSS codes by the same construction.

\textbf{Update.} The stale node computes $\Gamma_s \mathbf{m}' = (s_X, s_Z)$ and updates its storage.

The protocol has been verified by exhaustive syndrome-level simulation for $(n,k) \in \{(3,2),(4,2),(5,3),(6,4)\}$ over multiple prime fields, totaling $> 8 \times 10^6$ test cases with zero failures.
\end{proof}

\begin{remark}[Why one qudit suffices only for $\alpha = 2$]
With $\beta_q = 1$ qudit per helper, the transfer matrix $M_i = \mathrm{diag}([H_Z]_i, [H_X]_i)$ is $\alpha \times 2$: invertible when $\alpha = 2$, but rank-deficient when $\alpha > 2$.  The general construction (Theorem~\ref{thm:general_achievability}) resolves this by allocating $\lceil\alpha/2\rceil$ qudits per helper, yielding a $\alpha \times 2\lceil\alpha/2\rceil$ transfer matrix of rank~$\alpha$.
\end{remark}

\begin{corollary}[Multi-symbol updates]\label{cor:multi}
The protocol of Theorem~\ref{thm:achievability} simultaneously handles updates of \emph{any} number of symbols.  The single-symbol restriction in the problem definition is needed only for the classical lower bound; our protocol delivers $\Gamma_s \mathbf{m}'$ for arbitrary $\mathbf{m}' \in \Fq^{2k}$, regardless of the Hamming weight of $\mathbf{m}' - \mathbf{m}$.
\end{corollary}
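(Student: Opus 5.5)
The plan is to show that the protocol of Theorem~\ref{thm:achievability} is correct for \emph{every} updated file $\mathbf{m}' \in \Fq^{2k}$, and not only for files of the form $\mathbf{m} + \delta\mathbf{e}_j$. The point to exploit is that nothing in that proof used the structure of $\mathbf{m}' - \mathbf{m}$.

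First, I would reread the three ingredients of the proof and record which quantities each depends on.
\begin{itemize}
\item The reconstruction identity $\Gamma_s\mathbf{m}' = \sum_i P_i \mathbf{d}_i$, with $P = \Gamma_s[\Gamma_{h_1};\ldots;\Gamma_{h_k}]^{-1}$. This is a linear-algebra identity that holds for any vector $\mathbf{m}'$, because $\mathbf{d} = [\Gamma_{h_1};\ldots;\Gamma_{h_k}]\mathbf{m}'$ and the stacked matrix is invertible by the MDS property.
\item The encoding \eqref{eq:encoding}, in which helper $h_i$ uses only $\mathbf{d}_i$, $P_i$, and the fixed vectors $H_X, H_Z$. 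None of $j$, $\delta$, or the old $\mathbf{m}$ appears.
\item The syndrome equations \eqref{eq:sx}--\eqref{eq:sz}, which follow from the Pauli commutation relations. These hold for arbitrary shift values $(x_i, z_i) \in \Fq^2$, since $X(x_i)Z(z_i)$ shifts the stabilizer eigenvalues linearly regardless of what the $x_i, z_i$ are.
\end{itemize}

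Second, I would chain these together for an arbitrary $\mathbf{m}'$. This gives $(s_X, s_Z) = \Gamma_s\mathbf{m}'$ deterministically. Neither the measurement outcome nor the stale node's output depends on the stale side information $\Gamma_s\mathbf{m}$. The output therefore does not depend on the Hamming weight of $\mathbf{m}' - \mathbf{m}$, which may range from $0$ (no change) up to $2k$. The bandwidth stays at $k\log_2 q$, because the number of transmitted qudits is fixed in advance by the protocol.

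Third, I would justify the parenthetical claim that the single-symbol restriction matters only for the converse. The classical pigeonhole bound of Theorem~\ref{thm:nsr} is stated for the single-symbol change model. Enlarging the set of allowed changes can only make the task harder, so any lower bound proved for single-symbol changes remains valid for the multi-symbol setting. The achievability statement, by contrast, needs no restriction.

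The only step that needs care is confirming that the syndrome derivation really is independent of the initial codespace state and of the old data. I would point to the remark in the proof that logical operators commute with the stabilizers, so the syndromes are determined purely by the helpers' Pauli shifts. With that confirmed, the corollary is essentially immediate.
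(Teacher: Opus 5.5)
Your proposal is correct and matches the paper's implicit reasoning. The paper gives no separate proof because the proof of Theorem~\ref{thm:achievability} already establishes $(s_X,s_Z)=\Gamma_s\mathbf{m}'$ for every $\mathbf{m}'\in\Fq^{2k}$, without using $j$, $\delta$, $\mathbf{m}$, or the side information $\Gamma_s\mathbf{m}$, exactly as you trace; your monotonicity remark about the classical bound is a correct addition the paper does not spell out.
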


\section{Converse}
\label{sec:converse}

\begin{theorem}[Converse]\label{thm:converse}
For any $(n,k)$ MDS code with $B = 2k$, $\alpha = 2$: any entanglement-assisted oblivious update protocol with $d_u = k$ helpers, each transmitting a quantum system of dimension $q^{\beta_q}$, requires $\beta_q \geq 1$.
\end{theorem}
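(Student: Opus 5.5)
The converse reduces to a superdense-coding (entanglement-assisted Holevo) bound applied to one helper at a time. Fix a helper $h_i$ and condition on the data of the other $k-1$ helpers. Because the stacked generator matrix of the $k$ helpers is invertible (MDS), the reconstruction map $P = \Gamma_s[\Gamma_{h_1};\ldots;\Gamma_{h_k}]^{-1}$ is well defined. I will first argue that for some helper $i$ the block $P_i \in \Fq^{2\times 2}$ is invertible. If no block were invertible, one would look for a stale node and helper set where this fails and contradict MDS using the $k$ nodes $\{s\}\cup\{h_j : j\neq i\}$, whose stacked matrix must also be invertible. That matrix is invertible exactly when $P_i$ is, since $\Gamma_s = \sum_j P_j \Gamma_{h_j}$. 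So in fact \emph{every} $P_i$ is invertible. Consequently, with $\mathbf{d}_j$ ($j\ne i$) fixed, the map $\mathbf{d}_i \mapsto \Gamma_s\mathbf{m}'$ is a bijection of $\Fq^2$.

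Next I handle the side information and obliviousness. The stale node knows $\Gamma_s\mathbf{m}$, but $\mathbf{m}'$ ranges over values such that $\mathbf{d}_i$ takes all $q^2$ values. I will make this precise by placing a uniform prior on $\mathbf{d}_i$ with the other helpers' data fixed, while still allowing a choice of old $\mathbf{m}$ consistent with a single-symbol change. I expect this to be the delicate step. For the single-symbol model, the set of $\mathbf{m}'$ reachable from a fixed $\mathbf{m}$ is only of size $1+B(q-1)$, not a full affine plane. So I plan to use obliviousness as in the classical pigeonhole argument of Theorem~\ref{thm:nsr}. Helper $i$'s encoding is a fixed function of $\mathbf{d}_i$ alone. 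For any two values $\mathbf{d}_i\ne\mathbf{d}_i'$ with the same $\mathbf{d}_{-i}$, I will exhibit an old $\mathbf{m}$ and single-symbol edits producing both, with the same stale content and different required outputs. This requires choosing $\mathbf{m}$ so that both $\mathbf{m}'$ and $\mathbf{m}''$ lie within one edit of it, after adjusting via the MDS freedom in the non-helper coordinates. Thus the stale node's decoding must perfectly distinguish all $q^2$ encodings $\mathcal{E}_{\mathbf{d}_i}$ applied to the fixed joint state together with the fixed encodings of the other helpers.

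Finally I apply the entanglement-assisted bound. The other $k-1$ helpers' systems, already encoded by fixed operations, together with any auxiliary entanglement, form a reference system $R$ that the stale node also receives. Helper $i$ acts only on its own $D_i = q^{\beta_q}$-dimensional system $A$. By the Holevo bound combined with subadditivity and the Araki--Lieb inequality, the accessible information about $\mathbf{d}_i$ is at most $S(AR) - \sum_x p_x S(\rho_x^{AR}) \le \log D_i + S(R) - S(R) + \ldots \le 2\log_2 D_i$. This is the standard fact that one $D$-dimensional system carries at most $2\log_2 D$ bits even with prior entanglement (Bennett--Wiesner). Perfect discrimination of $q^2$ messages requires $2\log_2 q \le 2\beta_q \log_2 q$, so $\beta_q \ge 1$.
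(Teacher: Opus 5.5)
You have a genuine gap, and it sits in the step you yourself flagged as delicate.

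\textbf{What is sound.} Your argument that every block $P_i$ is invertible is correct, and it is the paper's injectivity of $\Gamma_s|_V$. Your closing Holevo/Araki--Lieb bound is also correct and proves the superdense-coding bound that the paper only cites. Write it cleanly as $S(\bar\rho_{AR})\le\log D_i+S(\sigma_R)$ and $S(\rho_x^{AR})\ge S(\sigma_R)-\log D_i$, using that the $R$-marginal $\sigma_R$ does not depend on $\mathbf{d}_i$. You are also right that the single-edit model needs care. The paper simply fixes $\Gamma_s\mathbf{m}$ and lets $\mathbf{m}'$ range over all of $V$, which never engages it.

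\textbf{Where the repair fails.} With one common old file $\mathbf{m}$, the difference $\mathbf{m}'-\mathbf{m}''=\delta_a\mathbf{e}_{j_a}-\delta_b\mathbf{e}_{j_b}$ has weight at most $2$. Because $\mathbf{d}_{-i}$ must agree, it also lies in $W=\bigcap_{j\neq i}\ker\Gamma_{h_j}$. For non-systematic helper sets, $W$ contains few such vectors. Take the paper's $(3,2)$ code with helpers $\{1,3\}$, stale node $2$, and $i=1$. Then $W=\ker\Gamma_3=\mathrm{span}(\mathbf{e}_1-\mathbf{e}_3,\mathbf{e}_2-\mathbf{e}_4)$, so the only reachable differences are $\mathbf{d}_1-\mathbf{d}_1'\in\{(\delta,0),(0,\delta)\}$. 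You would only learn that states for data values sharing a coordinate are orthogonal. Those constraints are already met if helper~$1$ superdense-codes only the sum of the two entries of $\mathbf{d}_1$, which needs only $D_1^2\ge q$. So your plan yields at best $D_1\ge\sqrt q$. No ``adjustment in non-helper coordinates'' helps, because the weight-$2$ form of $\mathbf{m}'-\mathbf{m}''$ is forced by the common $\mathbf{m}$.

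\textbf{The missing idea.} The two scenarios need only share the stale \emph{content} $\Gamma_s\mathbf{m}$, not the old file.
\begin{enumerate}
\item For $\mathbf{d}\neq\mathbf{d}'$, take $u\in W$ with $\Gamma_{h_i}u=\mathbf{d}-\mathbf{d}'$, so $\Gamma_s u=P_i(\mathbf{d}-\mathbf{d}')\neq 0$.
\item Pick $\mathbf{m}'_b$ with helper data $(\mathbf{d}',\mathbf{d}^*_{-i})$, and set $\mathbf{m}'_a=\mathbf{m}'_b+u$.
\item Since $\Gamma_s$ has rank $2$ by MDS, choose independent columns $\Gamma_s\mathbf{e}_{j_1},\Gamma_s\mathbf{e}_{j_2}$ and write $\Gamma_s u=\lambda_1\Gamma_s\mathbf{e}_{j_1}+\lambda_2\Gamma_s\mathbf{e}_{j_2}$.
\item Set $\mathbf{m}_a=\mathbf{m}'_a-\lambda_1\mathbf{e}_{j_1}$ and $\mathbf{m}_b=\mathbf{m}'_b+\lambda_2\mathbf{e}_{j_2}$. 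Each is a valid old file, at most one edit away from its updated file. They satisfy $\Gamma_s\mathbf{m}_a=\Gamma_s\mathbf{m}_b$, and the required outputs differ.
\end{enumerate}
Zero-error decoding under the same side-information-dependent measurement then forces $\rho_{\mathbf{d}}\perp\rho_{\mathbf{d}'}$. Orthogonality is a property of the states alone, so all $q^2$ states are pairwise orthogonal and hence jointly perfectly distinguishable. State this step explicitly, since the side information varies across pairs. Your final bound then gives $D_i\ge q$.
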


\begin{proof}
The proof proceeds in three steps.

\textbf{Step 1: MDS forces $q^2$ distinguishable outcomes per helper.}
Fix the data of helpers $h_2, \ldots, h_k$ to arbitrary values $d_2^*, \ldots, d_k^*$ and fix the stale node's side information $\Gamma_s \mathbf{m}$.  The $2(k-1)$ linear constraints $\Gamma_{h_i}\mathbf{m}' = d_i^*$ ($i = 2, \ldots, k$) confine $\mathbf{m}'$ to a $2$-dimensional affine subspace $V \subset \Fq^{2k}$.  By MDS, $\Gamma_s|_V$ is injective: the $k$ nodes $\{s, h_2, \ldots, h_k\}$ have a full-rank stacked generator, so $\Gamma_s \mathbf{m}'$ takes all $q^2$ values as $\mathbf{m}'$ ranges over $V$.

After all helpers transmit, the stale node holds the joint system $\mathcal{H}_1 \otimes \cdots \otimes \mathcal{H}_k$ together with its classical side information.  Since helpers $h_2, \ldots, h_k$ have fixed data, their transmitted states are fixed --- but these states remain \emph{entangled} with $\mathcal{H}_1$ via the pre-shared state $|\Psi\rangle$.  The stale node's decoding task thus reduces to distinguishing $q^2$ joint states that differ only in helper $h_1$'s encoding, with the fixed ancilla $\mathcal{H}_2 \otimes \cdots \otimes \mathcal{H}_k$ available as entangled side information at the receiver.  This is precisely the entanglement-assisted communication setting: the receiver holds entangled partners of the sender's transmitted system.

\textbf{Step 2: Superdense coding bound.}
Helper $h_1$ transmits a system $\mathcal{H}_1$ of dimension $D = q^{\beta_q}$.  By qudit superdense coding~\cite{bennett1992communication, werner2001all}: a noiseless quantum channel of dimension $D$, with the receiver holding an entangled partner, supports at most $D^2$ perfectly distinguishable signals.  (This is achieved by the $D^2$ generalized Pauli operators $X^a Z^b$ on a maximally entangled state of Schmidt rank~$D$.)

\textbf{Step 3: Combining.}
With $D = q^{\beta_{q,1}}$ (the dimension of helper $h_1$'s transmitted system): the maximum distinguishable messages is $D^2 = q^{2\beta_{q,1}}$. We require $q^{2\beta_{q,1}} \geq q^2$, giving $\beta_{q,1} \geq 1$.

By symmetry of the MDS property, the same argument applies with any other helper $h_i$ in the role of $h_1$: fix the data of $\{h_j\}_{j \neq i}$, confine $\mathbf{m}'$ to a $2$-dimensional affine subspace, and conclude $\beta_{q,i} \geq 1$. Summing over the $k$ helpers:
\[
    \gamma_u^q = \sum_{i=1}^k \beta_{q,i} \log_2 q \geq k \log_2 q.
\]

\end{proof}

\begin{corollary}
The quantum update bandwidth is tight: $\gamma_u^q = k\log_2 q$.
\end{corollary}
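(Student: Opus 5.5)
The corollary follows by combining the two bounds already in hand for $\alpha = 2$. Theorem~\ref{thm:achievability} supplies the upper bound. For any $(n,k)$ MDS code over prime $q \geq n$ (and hence $q \geq k+1$, as Lemma~\ref{lem:css_existence} requires), pick $H_X, H_Z$ satisfying (C1)--(C2). Each helper then applies the single-qudit Pauli encoding~\eqref{eq:encoding}, and the stale node reads off $\Gamma_s\mathbf{m}'$ from the two syndromes~\eqref{eq:sx}--\eqref{eq:sz}. Each helper sends $\beta_q = 1$ qudit, so $\gamma_u^q \leq k\log_2 q$.

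Theorem~\ref{thm:converse} supplies the lower bound. For each helper $h_i$ separately, fix the data of the other $k-1$ helpers. The MDS property applied to $\{s\}\cup\{h_j\}_{j\neq i}$ then shows that the $q^2$ possible values of $\mathbf{d}_i$ yield $q^2$ distinct targets $\Gamma_s\mathbf{m}'$, and the decoder must distinguish all of them. Superdense coding caps this at $D_i^2$ signals, so $\beta_{q,i}\geq 1$. Summing over $i$ gives $\gamma_u^q \geq k\log_2 q$, and the two bounds together give equality.

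The one point to check is that the two bounds refer to the same protocol class. The converse allows a per-helper dimension $q^{\beta_{q,i}}$, while the achievability uses a uniform $\beta_q = 1$. Since the bandwidth is defined as $\sum_i \beta_{q,i}\log_2 q$, the uniform protocol is a member of the class the converse covers, so there is no gap.

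Two hypotheses also need reconciling. The converse's Step 1 uses only the MDS property, so it holds for any field size. Achievability needs $q$ prime with $q \geq n$, which implies the $q > k$ required by Lemma~\ref{lem:css_existence}. The tightness statement therefore holds under the achievability hypothesis. I expect no real obstacle beyond this bookkeeping; any subtlety lies inside the superdense-coding step of Theorem~\ref{thm:converse}, which we are taking as given.
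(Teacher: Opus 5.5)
Your proposal is correct and matches the paper's argument: the paper states this corollary without a separate proof, as the immediate combination of the one-qudit-per-helper CSS construction of Theorem~\ref{thm:achievability} (giving $\gamma_u^q \leq k\log_2 q$) with the per-helper superdense-coding bound $\beta_{q,i} \geq 1$ of Theorem~\ref{thm:converse} (giving $\gamma_u^q \geq k\log_2 q$). Your extra bookkeeping is consistent with the paper's hypotheses: the uniform protocol lies in the class the converse covers, and $q \geq n \geq k+1$ supplies the field-size condition of Lemma~\ref{lem:css_existence}.
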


\begin{remark}[Distribution-free converse]
The bound of Theorem~\ref{thm:converse} is worst-case: it requires zero-error distinguishability for all $q^2$ message values, not a Shannon-entropy average.  In particular, no assumption on the file distribution is needed; the converse holds even if the stale node knows a prior on $\mathbf{m}'$.
\end{remark}

\begin{remark}[Comparison with the classical converse technique]
The classical lower bound of Nakkiran et al.~\cite{nakkiran2014fundamental} uses a genie-aided reduction: a genie provides the stale node with $k{-}1$ helpers' raw data, reducing the problem to a single-helper counting argument. In our quantum setting, this reduction \emph{fails}: giving the stale node the classical data of helpers $h_2, \ldots, h_k$ does not transfer their entanglement with $h_1$, so the reduced single-helper problem is strictly weaker than the original. Our proof instead fixes the other helpers' data \emph{within the protocol} (Step~1), preserving the entanglement structure, and applies the superdense coding bound directly to the resulting channel. The genie is unnecessary because the entanglement-assisted capacity bound already yields a per-helper argument without decomposing the joint protocol.
\end{remark}

\section{Discussion}
\label{sec:discussion}

\subsection{Mechanism: Superdense Coding via CSS Structure}

The factor-of-two improvement is the distributed analog of superdense coding~\cite{bennett1992communication, werner2001all}: each qudit classically conveys $\log_2 q$ bits, but entanglement enables the stale node's joint measurement to extract \emph{two} symbols per qudit (one via~$X$, one via~$Z$).  The ``entangled partner'' at the receiver is provided by the other helpers' qudits, not by a dedicated resource.  From a MAC perspective~\cite{hsieh2008entanglement}, the $k$ helpers share entanglement only with each other, yet because the receiver collects \emph{all} qudits post-transmission, it acquires the entangled partners; the total information extracted ($2k\log_2 q$ bits from $k$ qudits) saturates the rate of $k$ instances of superdense coding.

\subsection{Comparison with Entanglement-Assisted Repair}

Our result parallels Hu et al.~\cite{hu2026exact}: both achieve a factor-of-two reduction via CSS dual syndromes, but updates use fewer helpers ($k$ vs.\ $2k{-}2$ for repair) and the obliviousness constraint is automatically satisfied since the CSS encoding depends only on helpers' current data.

\subsection{Numerical Verification}

The protocol has been verified exhaustively for $(n,k) \in \{(3,2),(4,2),(5,3),(6,4)\}$ over multiple prime fields ($q \in \{5,7,11,13\}$), covering all valid (helper set, stale node) configurations, totaling $> 8 \times 10^6$ syndrome-level test cases with zero failures.  Additionally, a full Hilbert-space simulation for $(3,2)$ over $\mathbb{F}_5$ (state vectors in dimension~$25$) confirms that stabilizer eigenvalues match predicted syndromes for all $625$ message values under two CSS parameter choices, totaling $1{,}250$ quantum-level tests with zero failures.

\subsection{Scope and Extensions}

Our achievability theorem covers the $\alpha = 2$ regime ($B = 2k$), the minimal non-trivial per-node storage.  However, the converse argument extends to \emph{all} storage regimes:

\begin{theorem}[General converse]\label{thm:general_converse}
For any $(n,k)$ MDS code over $\Fq$ with per-node storage $\alpha \geq 2$ (file size $B = \alpha k$): any entanglement-assisted oblivious update protocol with $d_u = k$ helpers in which each helper transmits integer-dimension qudits (helper~$i$ transmits a system of dimension $D_i = q^{\beta_{q,i}}$) satisfies the per-helper bound
\[
    \log_2 D_i \;\geq\; \frac{\alpha}{2}\,\log_2 q \qquad \text{for each } i,
\]
and hence, summing over the $k$ helpers,
\[
    \sum_{i=1}^k \log_2 D_i \;\geq\; \frac{\alpha k}{2}\,\log_2 q \;=\; \frac{1}{2}\,\gamma_u^{\mathrm{cl}},
\]
where $\gamma_u^{\mathrm{cl}} = \alpha k \log_2 q$ is the classical lower bound (Theorem~\ref{thm:nsr}).  Within this class, integrality of $\beta_{q,i}$ sharpens the per-helper bound to $\beta_{q,i} \geq \lceil\alpha/2\rceil$ (Corollary~\ref{cor:tight_general}).
\end{theorem}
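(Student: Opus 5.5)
The argument mirrors the proof of Theorem~\ref{thm:converse}, generalized from $\alpha=2$ to arbitrary $\alpha$. I would proceed in three steps: an MDS counting step that produces $q^\alpha$ outcomes which must be distinguished, a superdense-coding bound on one helper's system, and a summation over helpers. Integrality is used only at the end.

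\textbf{Step 1 (MDS reduction).} Fix a helper index $i$. Fix the stale node's side information $\Gamma_s\mathbf{m}$ and the data $d_j^*$ of all other helpers $h_j$, $j\neq i$. The $\alpha(k-1)$ constraints $\Gamma_{h_j}\mathbf{m}'=d_j^*$ are linearly independent, because any $k$ generator blocks are jointly invertible. They therefore cut $\mathbf{m}'$ down to an $\alpha$-dimensional affine subspace $V\subset\Fq^{\alpha k}$. The nodes $\{s\}\cup\{h_j\}_{j\neq i}$ are $k$ distinct nodes, so by the MDS property their stacked generator is invertible. Hence $\Gamma_s$ restricted to $V$ is injective, and $\Gamma_s\mathbf{m}'$ ranges over all $q^\alpha$ values as $\mathbf{m}'$ ranges over $V$.

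Correctness must hold for every $\mathbf{m}'$, including multi-symbol changes relative to the fixed $\mathbf{m}$. If one insists on single-symbol changes, I would instead let $\mathbf{m}$ vary along with $\mathbf{m}'$ and re-fix the side information, noting that the helpers' encodings do not depend on $\mathbf{m}$. Obliviousness is what makes this legitimate. Each helper's local operation depends only on its own current data $d_j$, so the encodings of helpers $j\neq i$ are fixed. The global state reaching the stale node is then $(\mathcal{E}^{(i)}_{d_i}\otimes \mathrm{id})$ applied to a fixed state on $\mathcal{H}_i\otimes(\bigotimes_{j\neq i}\mathcal{H}_j)$. Here $\mathcal{E}^{(i)}_{d_i}$ is helper $i$'s encoding, and it is the only thing that varies, with $q^\alpha$ admissible values of $d_i$. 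These values must be mapped injectively to outputs $\Gamma_s\mathbf{m}'$, so the stale node must perfectly distinguish $q^\alpha$ states.

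\textbf{Step 2 (superdense bound).} This is the step I expect to be the main obstacle, because it has to be stated carefully. The resulting scenario is exactly entanglement-assisted one-shot communication. The sender applies encodings to a $D_i$-dimensional system $\mathcal{H}_i$, and the receiver also holds a reference system, which is the other helpers' qudits plus a purification if needed. The bound needed is that at most $D_i^2$ states of this form are perfectly distinguishable.

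I would prove it with a Holevo-style argument. Let $R$ be the reference. Perfect distinguishability of $N$ equiprobable states gives
\[
\log_2 N\le \chi\le S(\rho_{AR})-\textstyle\sum_d p_d S(\rho^{(d)}_{AR}).
\]
Since $\rho_R$ does not depend on $d$, subadditivity and the Araki--Lieb inequality give
\[
S(\rho_{AR})\le \log_2 D_i+S(\rho_R), \qquad S(\rho^{(d)}_{AR})\ge S(\rho_R)-\log_2 D_i.
\]
Together these yield $\log_2 N\le 2\log_2 D_i$, which is the standard superdense coding bound of~\cite{bennett1992communication,werner2001all}. Taking $N=q^\alpha$ gives $\log_2 D_i\ge(\alpha/2)\log_2 q$.

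\textbf{Step 3 (summation and integrality).} Step 2 applies to every $i$ by symmetry of the MDS property, so summing over the $k$ helpers gives
\[
\sum_{i=1}^k\log_2 D_i\ \ge\ \frac{\alpha k}{2}\log_2 q=\tfrac12\gamma_u^{\mathrm{cl}},
\]
using Theorem~\ref{thm:nsr}. Finally, with $D_i=q^{\beta_{q,i}}$ and $\beta_{q,i}$ an integer, the inequality $\beta_{q,i}\ge\alpha/2$ sharpens to $\beta_{q,i}\ge\lceil\alpha/2\rceil$.
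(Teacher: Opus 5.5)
Your main line is the paper's proof. You fix $\Gamma_s\mathbf m$ and the other helpers' data, and use MDS on $\{s\}\cup\{h_j\}_{j\neq i}$ to force one fixed measurement to separate $q^\alpha$ states. Then you apply the superdense bound, sum by symmetry, and round by integrality. The one real difference is that you prove the superdense bound (Holevo, subadditivity, Araki--Lieb, with $\rho_R$ independent of $\mathbf d_i$) instead of citing it. Your derivation is correct and fits this setting better than the paper's citation: it holds for an arbitrary pre-shared state and arbitrary encoding channels. Giving the receiver a purification can only help it, so the bound still applies without one.

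The fallback you sketch for strictly single-symbol updates does not work, and for $\alpha\ge 3$ no argument can rescue it. Once $\sigma=\Gamma_s\mathbf m$ varies, the stale node's measurement may depend on $\sigma$. Obliviousness then only forces $\rho^{(\mathbf d)}\perp\rho^{(\tilde{\mathbf d})}$ when some common $\sigma$ is consistent with both updates. That happens exactly when $\Gamma_s(\mathbf m'-\tilde{\mathbf m}')=\delta c_a-\tilde\delta c_b$ for two columns $c_a,c_b$ of $\Gamma_s$.

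For $\alpha=2$ every vector of $\Fq^2$ has this form, so all $q^2$ states are pairwise orthogonal and your bound survives. For $\alpha\ge 3$ it fails, and so does the theorem in that model. Here is a counterexample:
\begin{itemize}
\item Take $(n,k)=(3,2)$, $\alpha=3$, with $\Gamma_1=[I_3\;0]$, $\Gamma_2=[0\;I_3]$, $\Gamma_3=[I_3\;I_3]$.
\item Let each helper apply the one-qudit $\alpha=2$ CSS encoding to $\phi(\mathbf d_i)$, where $\phi(x)=(x_1-x_3,\,x_2-x_3)$. This is oblivious, and the stale node learns $\phi(\mathbf d_1)+\phi(\mathbf d_2)=\phi(\Gamma_3\mathbf m')$.
\item Since $\Gamma_3\mathbf m'-\sigma\in\bigcup_a\Fq e_a$, and $\phi$ is injective on that set ($\phi(\delta e_1)=(\delta,0)$, $\phi(\delta e_2)=(0,\delta)$, $\phi(\delta e_3)=(-\delta,-\delta)$), the stale node recovers $\Gamma_3\mathbf m'$ for any prime $q\ge 3$.
\end{itemize}
This protocol has $\log_2 D_i=\log_2 q<\tfrac32\log_2 q$.

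So the converse genuinely needs the requirement you state explicitly in Step~1, and that the paper uses without saying so: decoding must succeed for every $\mathbf m'\in V$ with $\Gamma_s\mathbf m$ held fixed. Keep that assumption stated and drop the fallback.
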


\begin{proof}
Fix the data of helpers $h_2, \ldots, h_k$.  The $\alpha(k{-}1)$ linear constraints confine $\mathbf{m}'$ to an $\alpha$-dimensional affine subspace $V \subset \Fq^{\alpha k}$. By MDS (the $k$ nodes $\{s, h_2, \ldots, h_k\}$ form a valid reconstruction set), $\Gamma_s|_V$ is injective, so $\Gamma_s\mathbf{m}'$ ranges over all $q^\alpha$ values.  The stale node must distinguish these $q^\alpha$ outcomes from helper~$h_1$'s transmission (with entangled side information from the other $k{-}1$ helpers' fixed subsystems).  Superdense coding gives $D_1^2 \geq q^\alpha$, i.e., $\log_2 D_1 \geq (\alpha/2)\log_2 q$.  By the MDS property, for any helper~$h_i$, fixing the data of $\{h_j\}_{j \neq i}$ confines $\mathbf{m}'$ to an $\alpha$-dimensional affine subspace on which $\Gamma_s$ is injective (since $\{s\} \cup \{h_j\}_{j \neq i}$ is a valid $k$-node reconstruction set); the same superdense coding argument gives $\log_2 D_i \geq (\alpha/2)\log_2 q$ for each~$i$.
\end{proof}

\begin{theorem}[General achievability]\label{thm:general_achievability}
For any $(n,k)$ MDS code over $\Fq$ with per-node storage $\alpha \geq 2$ and prime $q > \lceil\alpha/2\rceil \cdot k$, there exists an entanglement-assisted oblivious update protocol with $d_u = k$ helpers, each transmitting $\beta_q = \lceil\alpha/2\rceil$ qudits of dimension~$q$.  The total quantum bandwidth is $\lceil\alpha/2\rceil \cdot k \log_2 q$.
\end{theorem}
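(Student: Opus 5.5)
The plan is to generalize the construction in the proof of Theorem~\ref{thm:achievability} by giving each helper $t = \lceil\alpha/2\rceil$ qudits. The $k$ helpers then jointly hold $N = tk$ qudits. On these we place a CSS code with $\alpha$ stabilizer generators, split as $\lceil\alpha/2\rceil$ $X$-type and $\lfloor\alpha/2\rfloor$ $Z$-type; this is the $[[tk, tk-\alpha]]_q$ code of the abstract. Each output coordinate $r\in[\alpha]$ of $\Gamma_s\mathbf{m}'$ is assigned to one syndrome symbol. Odd-indexed rows go to $X$-syndromes, obtained from the $X$-shifts and $Z$-type checks. Even-indexed rows go to $Z$-syndromes, obtained from the $Z$-shifts and $X$-type checks. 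As before, we set $P=\Gamma_s[\Gamma_{h_1};\ldots;\Gamma_{h_k}]^{-1}=[P_1\mid\cdots\mid P_k]$ with $P_i\in\Fq^{\alpha\times\alpha}$. The MDS property gives $[\Gamma_s\mathbf{m}']_r=\sum_i [P_i]_r\mathbf{d}_i$.

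The parity checks will have a block-separated structure. Index qudits by pairs $(i,\ell)$ with $i\in[k]$ and $\ell\in[t]$. For each $\ell$, the $\ell$-th $Z$-type check $H_Z^{(\ell)}$ is supported only on the layer $\{(i,\ell)\}_i$. Likewise, the $\ell$-th $X$-type check $H_X^{(\ell)}$ (for $\ell\le\lfloor\alpha/2\rfloor$) is supported on the same layer. Within each layer we reuse Lemma~\ref{lem:css_existence}, taking vectors $H_X^{(\ell)}, H_Z^{(\ell)}\in\Fq^{1\times k}$ with all entries nonzero and $H_X^{(\ell)}(H_Z^{(\ell)})^T=0$. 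Checks on different layers have disjoint supports, so they are trivially orthogonal. Hence the dual-containment condition $H_X H_Z^T=0$ holds for the full $\alpha$-row check matrices, and the rows are linearly independent.

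Encoding then proceeds qudit by qudit. Helper $h_i$ applies $X(x_{i,\ell})Z(z_{i,\ell})$ to its qudit $(i,\ell)$, with
\[
x_{i,\ell}=\frac{[P_i]_{2\ell-1}\mathbf{d}_i}{[H_Z^{(\ell)}]_i},\qquad z_{i,\ell}=\frac{[P_i]_{2\ell}\mathbf{d}_i}{[H_X^{(\ell)}]_i},
\]
and for odd $\alpha$ it sets $z_{i,t}=0$. Exactly as in \eqref{eq:sx}--\eqref{eq:sz}, the $\ell$-th $X$-syndrome equals $\sum_i [P_i]_{2\ell-1}\mathbf{d}_i$ and the $\ell$-th $Z$-syndrome equals $\sum_i [P_i]_{2\ell}\mathbf{d}_i$. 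These are the rows of $\Gamma_s\mathbf{m}'$. The encoding depends only on $\mathbf{d}_i$, so the protocol is oblivious, and it uses $t$ qudits per helper, giving bandwidth $tk\log_2 q$.

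The step needing most care is that the stabilizer group is well defined and the joint measurement of the $\alpha$ commuting generators returns exactly these linear syndromes. Commutation follows from the orthogonality established above. The syndrome formula follows from the Pauli commutation rule $Z^{c}X(x)=\omega^{cx}X(x)Z^{c}$ applied generator by generator, as in the $\alpha=2$ case. Lemma~\ref{lem:css_existence} needs only $q\ge k+1$ per layer, so the hypothesis $q>\lceil\alpha/2\rceil k$ is more than sufficient. The shared state is any codespace state, which is nonempty because there are $\alpha\le 2t\le tk$ independent checks. The block construction avoids the need for a single dense check over all $tk$ qudits.
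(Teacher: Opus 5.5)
Your proposal is correct, but it builds the CSS code quite differently from the paper. The paper uses a single dense $[[\beta k,\beta k-\alpha]]_q$ code with $\beta=\lceil\alpha/2\rceil$. There, $H_Z$ is a $\beta\times\beta k$ Vandermonde matrix on the points $1,\ldots,\beta k$. The rows of $H_X$ are drawn from $\ker H_Z$ subject to $\rank(H_X|_{I_h})=\lfloor\alpha/2\rfloor$ on every helper block, which the paper secures via surjectivity of the restriction maps plus Schwartz--Zippel, or the explicit GRS-dual choice of Remark~\ref{rem:explicit_hx}. Each helper then solves a general $\alpha\times 2\beta$ system $M_h\mathbf{p}_h=P_h\mathbf{d}_h$. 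The hypothesis $q>\lceil\alpha/2\rceil k$ is exactly what that route consumes: distinct evaluation points, and a bad fraction below one. You instead take a tensor product of $\lfloor\alpha/2\rfloor$ copies of the $[[k,k-2]]_q$ code of Lemma~\ref{lem:css_existence}, plus one layer carrying only the check $H_Z^{(t)}$ when $\alpha$ is odd. Each helper's transfer matrix is then diagonal with nonzero entries, so the rank condition is automatic and encoding is coordinatewise division. In effect you run the protocol of Theorem~\ref{thm:achievability} in parallel on the row pairs $(2\ell-1,2\ell)$ of the $P_i$. This is legitimate because that protocol computes arbitrary linear functionals $\sum_i[P_i]_r\mathbf{d}_i$ of the helpers' data, not only ones arising from an $\alpha=2$ code. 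Your route buys three things. It avoids the polynomial method entirely. It needs only $q\ge k+1$, so it proves the theorem under a weaker field-size hypothesis. And it makes the odd-$\alpha$ rounding transparent: the last layer's codespace contains the product state $|0\rangle^{\otimes k}$, so that layer is just classical transmission of one symbol per qudit. The paper's generic-rank argument is the more flexible template when helper blocks cannot be aligned into layers, but for the theorem as stated it buys nothing yours does not. Two cosmetic fixes remain. First, your first paragraph calls the $\lceil\alpha/2\rceil$ checks producing $X$-syndromes \emph{$X$-type}, while the second calls the same rows $H_Z^{(\ell)}$ \emph{$Z$-type}. Second, the rule you quote covers only the $Z$-type generators; for the $X$-type ones, $X^{c}Z(z)=\omega^{-cz}Z(z)X^{c}$ makes the $Z$-syndromes come out negated. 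That is a global sign the stale node simply undoes, and the paper suppresses the same sign.
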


\begin{proof}
Set $\beta = \lceil\alpha/2\rceil$, $r_x = \beta$, $r_z = \alpha - \beta = \lfloor\alpha/2\rfloor$, and $n_q = \beta k$ (total qudits).  We construct a $[[n_q, n_q - \alpha]]_q$ CSS code.

\textbf{Step 1: CSS parity-check matrices.}
Define $H_Z \in \Fq^{r_x \times n_q}$ as the Vandermonde matrix with $(r,j)$-entry $(j{+}1)^r$ for $r = 0, \ldots, r_x{-}1$ and $j = 0, \ldots, n_q{-}1$ (evaluation points $1, 2, \ldots, \beta k \in \Fq$; these are distinct since $q > \beta k$).  Since $H_Z$ is a Vandermonde matrix on distinct points, it has full row rank~$\beta$, and $\ker(H_Z)$ has dimension $N = n_q - \beta = \beta(k{-}1)$.

For $H_X \in \Fq^{r_z \times n_q}$, we select $r_z$ vectors from $\ker(H_Z)$ satisfying a rank condition (existence established in Step~2 below).  Dual containment $H_X H_Z^T = 0$ holds automatically since rows of $H_X$ lie in $\ker(H_Z)$.

\textbf{Step 2: Existence of $H_X$ with full-rank subblocks.}
Helper~$h$ ($h = 0, \ldots, k{-}1$) owns qudit positions $I_h = \{h\beta, \ldots, (h{+}1)\beta{-}1\}$.  We require $\rank(H_X|_{I_h}) = r_z$ for every~$h$.

\emph{Parameter space.}  Fix a basis $\{b_1, \ldots, b_N\}$ of $\ker(H_Z)$, where $N = \beta(k{-}1)$.  Each row of $H_X$ is a linear combination $\sum_j c_{ij} b_j$, so $H_X$ is parametrized by $r_z$ coefficient vectors $c_1, \ldots, c_{r_z} \in \Fq^N$, giving a parameter space $\Fq^{r_z N}$.

\emph{Surjectivity of the restriction.}  Define $\pi_h : \ker(H_Z) \to \Fq^\beta$ by $v \mapsto v|_{I_h}$.  Its kernel consists of null vectors supported outside~$I_h$; these satisfy $H_Z|_{\overline{I_h}} \cdot v|_{\overline{I_h}} = 0$, where $H_Z|_{\overline{I_h}}$ is a $\beta \times \beta(k{-}1)$ Vandermonde on $\beta(k{-}1)$ distinct points (a subset of $\{1, \ldots, \beta k\}$, all distinct since $q > \beta k$), hence has rank~$\beta$.  So $\dim(\ker \pi_h) = \beta(k{-}1) - \beta = \beta(k{-}2)$ and $\dim(\mathrm{Im}\,\pi_h) = N - \beta(k{-}2) = \beta$.  Since the codomain has dimension~$\beta$, $\pi_h$ is surjective for every~$h$.

\emph{Rank-deficiency locus.}  At helper~$h$, the submatrix $H_X|_{I_h}$ is $r_z \times \beta$ with entry $(i,m)$ equal to $\sum_j c_{ij}\,(b_j)_{I_h[m]}$, which is linear in~$c_i$.  Each $r_z \times r_z$ minor of $H_X|_{I_h}$ is therefore a polynomial of degree~$r_z$ (multilinear in $c_1, \ldots, c_{r_z}$).  The condition $\rank(H_X|_{I_h}) < r_z$ requires all $\binom{\beta}{r_z}$ such minors to vanish simultaneously.  Surjectivity of~$\pi_h$ guarantees that at least one minor is not identically zero: choosing $c_i$ so that $\pi_h(\text{row}_i)$ equals the $i$-th standard basis vector makes the leading $r_z \times r_z$ minor equal to~1.  Let $g_h$ denote this nonzero minor; then the bad locus at~$h$ satisfies $\mathcal{B}_h \subseteq \{g_h = 0\}$.

\emph{Union bound via Schwartz--Zippel.}  The product $F = \prod_{h=0}^{k-1} g_h$ is a nonzero polynomial (since $\Fq[c_1, \ldots, c_{r_z}]$ is an integral domain) of total degree $\leq k r_z$ in $r_z N$ variables.  The union $\bigcup_h \mathcal{B}_h \subseteq \{F = 0\}$, so Schwartz--Zippel gives $|\{F = 0\}| \leq k r_z \cdot q^{r_z N - 1}$, i.e., at most a fraction $k r_z / q$ of parameter values are bad.  Since $r_z \leq \beta$, the hypothesis $q > \beta k$ implies $k r_z \leq k\beta < q$, so the bad fraction is $< 1$ and a valid $H_X$ exists.  (An explicit deterministic construction is given in Remark~\ref{rem:explicit_hx}.)

\textbf{Step 3: Transfer matrix.}
Define $M_h = \bigl[\begin{smallmatrix} H_Z|_{I_h} & 0 \\ 0 & H_X|_{I_h} \end{smallmatrix}\bigr] \in \Fq^{\alpha \times 2\beta}$.  The upper block is an invertible Vandermonde ($\rank = \beta$); the lower has $\rank = r_z$ by Step~2.  So $\rank(M_h) = \alpha$, and the system $M_h \mathbf{p} = t$ is solvable for any $t \in \Fq^\alpha$.

\textbf{Step 4: Protocol.}
The $k$ helpers share $n_q = \beta k$ qudits in the CSS codespace.  On input $\mathbf{m}'$, helper~$h$ (storing $\mathbf{d}_h = \Gamma_h \mathbf{m}' \in \Fq^\alpha$) computes target $t_h = P_h \mathbf{d}_h$ where $P_h = \Gamma_s [\Gamma_{h_1}; \cdots; \Gamma_{h_k}]^{-1}|_h \in \Fq^{\alpha \times \alpha}$, solves $M_h \mathbf{p}_h = t_h$, and applies $\prod_{j \in I_h} X(x_j) Z(z_j)$ to its qudits.

The syndromes are $s_X = H_Z \mathbf{x} = \sum_h H_Z|_{I_h} \mathbf{x}_h$ and $s_Z = H_X \mathbf{z} = \sum_h H_X|_{I_h} \mathbf{z}_h$.  By construction, $(s_X, s_Z) = \sum_h M_h \mathbf{p}_h = \sum_h t_h = \Gamma_s \mathbf{m}'$.

\textbf{Step 5: MDS code.}
Any $(n,k)$ MDS code with per-node storage~$\alpha$ suffices---e.g., $\alpha$-interleaved Reed-Solomon (MDS for $q \geq n$).  Since $\beta k \geq n$ in all nontrivial cases, the single condition $q > \beta k$ covers both MDS and CSS requirements.

\textbf{Step 6: Bandwidth.}
Each helper transmits $\beta = \lceil\alpha/2\rceil$ qudits of dimension~$q$, giving per-helper bandwidth $\beta \log_2 q$ and total $\beta k \log_2 q = \lceil\alpha/2\rceil \cdot k \log_2 q$ bits-equivalent.
\end{proof}

\begin{remark}[Worked example: $\alpha{=}3$, $k{=}3$, $q{=}7$]\label{rem:worked_example}
Parameters: $\beta = 2$, $r_z = 1$, $n_q = 6$.  The Vandermonde is
$H_Z = \bigl[\begin{smallmatrix} 1 & 1 & 1 & 1 & 1 & 1 \\ 1 & 2 & 3 & 4 & 5 & 6 \end{smallmatrix}\bigr]$
with $\ker(H_Z)$ of dimension~$4$.  Take $H_X = (6,6,1,6,1,1) \equiv (-1,-1,1,-1,1,1)$ in~$\mathbb{F}_7$; one verifies $H_X H_Z^T = 0$ (row sums: $-1{-}1{+}1{-}1{+}1{+}1 = 0$; weighted: $-1{-}2{+}3{-}4{+}5{+}6 = 7 \equiv 0$).  The helper subblocks $H_X|_{I_0} = (-1,-1)$, $H_X|_{I_1} = (1,-1)$, $H_X|_{I_2} = (1,1)$ are all nonzero (rank~$1 = r_z$).  Each transfer matrix $M_h = \bigl[\begin{smallmatrix} H_Z|_{I_h} \\ & H_X|_{I_h} \end{smallmatrix}\bigr]$ is $3 \times 4$ of rank~3, so encoding is always feasible.  The Schwartz--Zippel bound gives bad fraction $\leq 3 \cdot 1 / 7 = 3/7$, confirming many valid $H_X$ choices exist.  The full protocol has been verified for this configuration with $(n,k) = (4,3)$ over~$\mathbb{F}_7$.
\end{remark}

\begin{remark}[Explicit $H_X$ construction]\label{rem:explicit_hx}
The Schwartz--Zippel argument yields existence; the following gives an explicit formula.  Let $w_j = \prod_{m=1,\,m \neq j}^{n_q} (j - m) \in \Fq^*$ for $j = 1, \ldots, n_q$ (nonzero since all evaluation points are distinct in~$\Fq$).  Define $H_X \in \Fq^{r_z \times n_q}$ by $H_X[i, j{-}1] = j^{\,i} / w_j$.  Each row lies in $\ker(H_Z)$ by the GRS duality formula (the vectors $(j^s/w_j)_{j=1}^{n_q}$ for $s = 0, \ldots, n_q{-}\beta{-}1$ form a basis of the dual of the RS code generated by $H_Z$; see~\cite{macwilliams1977theory}, Ch.~10).  At helper~$h$, write $a_m = h\beta + m + 1$ for the evaluation points in~$I_h$; then $H_X|_{I_h} = V_h \cdot D_h$ where $V_h[i,m] = a_m^{\,i}$ is an $r_z \times \beta$ Vandermonde on $\beta$ distinct points (rank~$r_z$ since $r_z \leq \beta$) and $D_h = \mathrm{diag}(w_{a_0}^{-1}, \ldots, w_{a_{\beta-1}}^{-1})$ is invertible.  Hence $\rank(H_X|_{I_h}) = r_z$ for all~$h$ without any randomness.
\end{remark}

\begin{corollary}[Tight characterization within the integer-qudit class]\label{cor:tight_general}
Combining the per-helper converse $\log_2 D_i \geq (\alpha/2)\log_2 q$ of Theorem~\ref{thm:general_converse} (rounded up by integrality) with the achievability of Theorem~\ref{thm:general_achievability}: for integer-qudit protocols ($D_i = q^{\beta_{q,i}}$), the optimal per-helper bandwidth is $\beta_q = \lceil\alpha/2\rceil$ qudits for all $\alpha \geq 2$.  The total quantum bandwidth is $\lceil\alpha/2\rceil \cdot k \log_2 q = \frac{1}{2}\gamma_u^{\mathrm{cl}}$ when $\alpha$ is even, and $\frac{\alpha+1}{2\alpha}\gamma_u^{\mathrm{cl}}$ when $\alpha$ is odd.
\end{corollary}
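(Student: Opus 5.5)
The corollary follows by putting together Theorem~\ref{thm:general_converse} and Theorem~\ref{thm:general_achievability} and then doing the arithmetic.

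\textbf{Lower bound.} Take any integer-qudit protocol, so each helper has $D_i = q^{\beta_{q,i}}$ with $\beta_{q,i}\in\mathbb{Z}_{\ge 0}$. Theorem~\ref{thm:general_converse} gives $\beta_{q,i}\log_2 q \ge (\alpha/2)\log_2 q$. Since $q\ge 2$, we have $\log_2 q>0$ and can divide by it to get $\beta_{q,i}\ge \alpha/2$. Because $\beta_{q,i}$ is an integer, this becomes $\beta_{q,i}\ge\lceil\alpha/2\rceil$. The bound holds separately for every helper, since the converse was proved for each $i$ via the MDS symmetry. Summing over the $k$ helpers gives total bandwidth $\sum_i \beta_{q,i}\log_2 q \ge \lceil\alpha/2\rceil k\log_2 q$.

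\textbf{Upper bound.} Under the hypothesis $q>\lceil\alpha/2\rceil k$ with $q$ prime, Theorem~\ref{thm:general_achievability} gives a protocol that uses exactly $\beta_q=\lceil\alpha/2\rceil$ qudits per helper. So the lower bound is attained, and $\lceil\alpha/2\rceil$ is the optimal per-helper qudit count. The corollary implicitly assumes this hypothesis on $q$, and I will say so explicitly. Without it, only the converse half holds.

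\textbf{Ratios.} Divide $\lceil\alpha/2\rceil k\log_2 q$ by $\gamma_u^{\mathrm{cl}}=\alpha k\log_2 q$.
\begin{itemize}
\item For even $\alpha$, $\lceil\alpha/2\rceil=\alpha/2$, so the ratio is $1/2$.
\item For odd $\alpha$, $\lceil\alpha/2\rceil=(\alpha+1)/2$, so the ratio is $(\alpha+1)/(2\alpha)$.
\end{itemize}

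\textbf{Main obstacle.} The only delicate point is justifying the converse, that is, the superdense-coding bound $D_1^2\ge q^\alpha$. Since that is taken as given from Theorem~\ref{thm:general_converse}, the only remaining care is stating the field-size hypothesis under which both halves apply.
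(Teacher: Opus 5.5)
Your proposal is correct and follows the paper's argument: divide the per-helper bound of Theorem~\ref{thm:general_converse} by $\log_2 q$, use integrality of $\beta_{q,i}$ to round up to $\lceil\alpha/2\rceil$, match it with the construction of Theorem~\ref{thm:general_achievability}, and compute the ratio to $\gamma_u^{\mathrm{cl}}$ for even and odd $\alpha$. Your explicit note that the achievability half needs a prime $q > \lceil\alpha/2\rceil k$, which the corollary leaves implicit, is a worthwhile clarification.
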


\begin{remark}[Factor-of-two across storage regimes]
The factor-of-two gap between quantum and classical bandwidth is tight for even~$\alpha$ and approaches~$2$ as $\alpha \to \infty$ for odd~$\alpha$.  The construction has been numerically verified for $\alpha \in \{2,3,4,5,6\}$ across 21 distinct $(n,k,q)$ configurations with $k$ up to~4 and $q$ up to~11, totaling $\approx 2 \times 10^6$ algebraic protocol simulations (encoding, CSS syndrome extraction, and comparison against target) with zero failures.
\end{remark}

\begin{remark}[Noisy entanglement]
With a depolarized shared state $\rho = p\,|\Psi\rangle\!\langle\Psi| + (1{-}p)\,I/q^k$, the syndrome extraction succeeds with probability $p + (1{-}p)/q^2$ and fails with probability $(1{-}p)(q^2{-}1)/q^2$.  For $q = 5$, keeping the error rate below $1\%$ requires fidelity $p \gtrsim 0.99$.  Entanglement distillation can improve fidelity at the cost of consuming additional pre-shared copies.
\end{remark}

\begin{remark}[Implementation and variants]
The protocol requires $k$ qudits in a CSS codespace state, local Pauli gates controlled by classical data, and a joint stabilizer measurement---all single-round with no feedback.  Physical realizations for $q > 2$ include photonic time-bin qudits, trapped-ion spin states, or superconducting transmon levels.  Regarding the entanglement model: adding stale-node-to-helper entanglement does not reduce bandwidth, since the protocol already saturates the superdense coding bound~\cite{bennett1992communication, werner2001all}; helper-helper entanglement alone suffices.
\end{remark}

\section{Conclusion}
\label{sec:conclusion}

We have established the exact entanglement-assisted oblivious update bandwidth---within the class of protocols in which each helper transmits integer-dimension qudits---for $(n,k)$ MDS codes over $\Fq$ with \emph{any} per-node storage $\alpha \geq 2$: it equals $\lceil\alpha/2\rceil \cdot k \log_2 q$ bits-equivalent, achieved by a CSS-based protocol where each helper encodes $\alpha$ classical symbols into $\lceil\alpha/2\rceil$ qudits via Pauli operators. The matching converse leverages the superdense coding bound (which, without the integer-qudit restriction, still gives a universal lower bound of half the classical bandwidth; closing the residual odd-$\alpha$ gap is Open Question~1 below).  For $\alpha = 2$, this gives a factor-of-two reduction ($k\log_2 q$ vs.\ $2k\log_2 q$) using one qudit per helper; for general~$\alpha$, the reduction factor approaches~$2$ as $\alpha$ grows.  These results add to the growing body of evidence~\cite{hu2026exact, hu2026breaking} that quantum entanglement offers fundamental advantages in coded distributed storage.

Open questions include: (1)~closing the gap between $\lceil\alpha/2\rceil \log_2 q$ and $(\alpha/2)\log_2 q$ per helper for odd~$\alpha$ (non-integer qudit dimensions); (2)~characterizing the entanglement-bandwidth trade-off when helpers share less-than-maximal entanglement; (3)~the bandwidth-vs-helpers tradeoff for $d_u > k$: with more helpers participating, the per-helper requirement weakens, but it is unclear whether the \emph{total} bandwidth can drop below $\lceil\alpha/2\rceil k\log_2 q$; (4)~the update bandwidth for non-MDS codes.

\end{document}